\newcommand{\R}{\mathbb{R}}
\newcommand{\cW}{\mathcal{W}}
\newtheorem{lemma}{Lemma}
\DeclareMathOperator{\diag}{diag}
\newcommand{\T}{^\top}
\newcommand{\D}{\mathrm{D}}
\newenvironment{proof}{\par\noindent{\bf Proof.}}{\hfill$\square$\par}
\tikzset{
block/.style={
  draw, 
  fill=white!20, 
  rectangle, 
  minimum height=3em, 
  minimum width=6em
  },
sum/.style={
  draw, 
  fill=white!20, 
  circle, 
  },
input/.style={coordinate},
output/.style={coordinate},
pinstyle/.style={
  pin edge={to-,thin,black}
  }
}
\tikzstyle{neuron} = [
\begin{document}

\begin{frontmatter}

\title{Event Disturbance Rejection: \\ A Case Study} 



\author[First,Second]{Alessandro Cecconi} 
\author[First]{Michelangelo Bin} 
\author[Second,Third]{Rodolphe Sepulchre}
\author[First]{Lorenzo Marconi}

\address[First]{DEI, University of Bologna, Italy}
\address[Second]{STADIUS, KU Leuven, Belgium}
\address[Third]{Department of Engineering, University of Cambridge, UK}

\begin{abstract}
This article introduces the problem of robust event disturbance rejection. Inspired by the design principle of linear output regulation, a control structure based on excitable systems is proposed. Unlike the linear case, contraction of the closed-loop system must be enforced through specific input signals. This induced contraction enables a steady-state analysis similar to the linear case. Thanks to the excitable nature of the systems, the focus shifts from precise trajectory tracking to the regulation of discrete events, such as spikes. The study emphasizes rejecting events rather than trajectories and demonstrates the robustness of the approach, even under mismatches between the controller and the exosystem. This work is a first step towards developing a design principle for event regulation.
\end{abstract}

\begin{keyword}
Nonlinear systems, regulation theory, contraction theory, event regulation, nonlinear oscillators 
\end{keyword}

\end{frontmatter}

\section{Introduction}
\vspace*{-1em}
This article considers the problem of \emph{event disturbance rejection} in the context of excitable systems. Unlike classical regulation theory, which seeks the exact rejection of disturbance signals (i.e. $e(t)=~0$), this paper targets robust rejection of discrete events linked to continuous dynamics. Exact trajectory regulation is often unrealistic due to unavoidable model-controller mismatches. By leveraging event-based properties of excitable systems, the proposed approach ensures robustness against such uncertainties. The resulting control structure builds on the principles of linear output regulation theory.

Excitable systems, such as neurons, clearly illustrate the interplay between continuous dynamics and discrete events. Typically, these systems rest at a stable equilibrium but exhibit significant transient deviations—referred to as \emph{spikes}—in response to stimuli over a certain threshold \citep{izhikevich2007dynamical, gerstner2014neuronal}. In this context, spikes constitute discrete events clearly distinguishable from sub-threshold responses by their stereotypical shape and amplitude. In this work, an event specifically corresponds to the occurrence of a spike in the system's output. Hence, excitable systems allow for an agile characterization of discrete events even if the underlying dynamics is continuous. For this reason, they set the ground for the present study.

Traditionally, excitability has been studied by analyzing closed nonlinear systems under constant inputs \citep{winfree1980geometry, rinzel1998analysis}. However, experimental and numerical evidence suggests that contraction properties—and thus the reliability in reproducing spiking patterns—are inherently input-dependent \citep{mainen1995reliability, brette2003reliability, kosmidis2003analysis}. Motivated by this, the present work leverages contraction theory \citep{lohmiller1998contraction, pavlov_uniform_2006} to treat excitable systems as open dynamical systems, where contractive behaviors are induced by external stimuli.
The proposed approach naturally interlaces excitability, contraction, and regulation: excitable dynamics enable input-induced contraction; contraction, in turn, ensures the existence of a well-defined steady-state behavior; regulation goals are then ensured by embedding an internal model of the disturbance generator.

We start by considering the continuous-time linear version of the classical disturbance rejection problem, where both the plant and the exosystem are linear systems, and the disturbance signal $w$ a sinusoid. 
Although neurons are not describable by linear models, the linear setting is well-studied in control, and gives considerable insight on how to approach this novel event disturbance rejection problem in a nonlinear setting.
In particular, the classical linear solution in terms of trajectories \citep{davison1975robust, francis1976internal} suggests that key points in the linear case are: (i) linearity and contraction of the closed-loop system imply that the steady-state trajectories are uniquely defined by $w$, and they belong to the same class of signals (they oscillate at the same frequency); (ii) as a consequence, it is necessary and sufficient that the closed-loop system embeds a suitable internal model of the process generating $w$, so as to guarantee the existence of a good steady-state; (iii) contraction enables a simplified steady-state analysis by which one can prove that all closed-loop solutions converge to the target steady-state, thereby meeting the regulation objectives.
Next, by leveraging the insight of the linear solution, a nonlinear version of the problem is considered, where the exosystem and the plant are made up of properly interconnected excitable systems. The goal is to cancel out from the plant's output a periodic disturbance signal, seen as a train of spiking events, generated by this nonlinear oscillator.
Finally, the paper explores the extension to open systems by introducing another driving signal acting on the plant. In this case, the disturbance rejection problem at hand can be interpreted as the problem of restoring an input-output behavior given by the unperturbed controlled system. Some numerical simulations illustrate, in this simple case study, the performance and robustness of the proposed control structure.

\section{The Linear Motif}
This section develops the continuous-time linear version of the considered disturbance rejection problem. The structure of the overall control system is depicted in Figure \ref{fig.control_structure_linear}. 
In this case, the disturbance signal $w$ is a sinusoid, without loss of generality of unitary angular frequency. Hence, it is generated by the linear oscillator
\begin{align}\label{s.linear.w}
	\dot x_W &= \begin{bmatrix}
		0 & 1\\
		-1 & 0
	\end{bmatrix}x_W,
	&
	w&=\begin{bmatrix}
		1 & 0
	\end{bmatrix} x_W.
\end{align}

The plant $\Sigma_p$ to be controlled is modeled as a Bounded-Input Bounded-Output (BIBO) stable linear system with inputs $w$ and $u$, and output $y$.
For the sake of illustration, we consider the simplest model with such properties, i.e.
\begin{equation}\label{s.xp}
	\begin{aligned}
		\dot x &= - x + w - u \\
		y &= x.
	\end{aligned} 
\end{equation}

The goal is to design the controller $\Sigma_{\eta}$ generating $u$ in such a way that $y(t) \to 0$ for every $w\in\cW$, where here $\cW$ is the set of outputs generated by \eqref{s.linear.w}. This is a disturbance rejection control problem where the signal to be rejected is $w$. In this linear case, the notion of event is not well-defined, althought it may be thought as a limiting case where the event corresponds to the presence of a given harmonic, and hence it is spread over the entire time axis.

A  solution to this problem  was given in the more general context of regulation by \citep{davison1975robust,francis1976internal}. Such a solution employs a controller $\Sigma_{\eta}$ having the following general form:
\begin{subequations} \label{s.linear}
	\begin{align}
		\dot x_\eta &= \Phi x_\eta + Gy \label{s.linear.xeta}\\
		u&= k_y y+ K_{\eta} x_\eta
	\end{align}
\end{subequations}
in which $\Phi$ has eigenvalues $\pm j$, $(\Phi,G)$ is controllable, and  $k_y$, and $K_{\eta}$ are chosen so that the closed-loop matrix
\begin{equation}\label{d.linear.cl_matrix}
	\begin{bmatrix}
		-k_y - 1 & -K_{\eta} \\G & \Phi 
	\end{bmatrix}
\end{equation} 
is Hurwitz. 
In the specific example considered here, a possible choice is
\begin{align}\label{s.linear.I.choice}
	\Phi &= \begin{bmatrix}
		0 & 1\\-1 & 0
	\end{bmatrix},& G&=\begin{bmatrix}
		1\\0
	\end{bmatrix},
	\\
	k_y&= 0, & K_{\eta_1}&>-K_{\eta_2}, & K_{\eta_2}&<1. \label{s.linear.gains}
\end{align}

 \begin{figure}[t]
	\centering
	\includegraphics[width=0.4\textwidth]{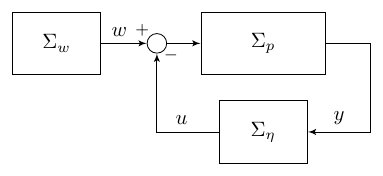}
	\caption{Block diagram of the disturbance rejection problem under consideration. The exosystem $\Sigma_w$ generates a disturbance $w$ acting on the plant $\Sigma_p$, and the control goal is to inhibit it by means of the control action $u$ generated by $\Sigma_{\eta}$.}
	\label{fig.control_structure_linear}
\end{figure}

The linear solution suggests the following key ingredients:
\begin{enumerate}[label=\textbf{I\arabic*.},ref=\textbf{I\arabic*}]
	\item\label{I1}
Closed-loop stability implies that the closed-loop system is contractive in the sense of \citep{lohmiller1998contraction} (hence convergent in the sense of \citep{pavlov_uniform_2006}). As a consequence, for each $w\in\cW$,  there is a unique steady-state solution which is asymptotically stable.
	Such a steady-state solution has the form
	\begin{equation}
		x_{cl}(t) = \Pi w(t),
		\label{eq.ss}
	\end{equation}
	where $x_{cl} = (x , x_\eta)$, and $\Pi \in \R^{2 \times 1}$ is the solution of the Francis' Equation~\citep[Ch. 4]{isidori2017lectures}. This underlines that all steady-state solutions are of the same family of the driving signal $w$.
	
	\item\label{I2} When $y=0$ as desired, the controller's equations boil down to
	\[
	\begin{aligned}
		\dot x_\eta &= \begin{bmatrix}
			0 & 1\\
			-1 & 0
		\end{bmatrix}x_\eta,\\
		w &= u = K_{\eta} x_\eta.
	\end{aligned}
	\]
	Namely, the controller embeds a copy of the exosystem generating $w$.
	
	\item\label{I3} The previous two points suggest that system $\Sigma_{\eta}$ must be designed in such a way that:
		\begin{itemize}
			\item (Stability requirement) The closed-loop system is contractive/convergent, so as all solutions converge to a unique steady-state trajectory $x_{cl}$.
			\item (Steady-state requirement) There exists a steady-state solution $x_{cl}$ of the closed-loop system in which $y=0$.
		\end{itemize}
\end{enumerate}


As the steady-state solution is unique, the above points suffice to show that every closed-loop solution satisfies $\lim_{t\to\infty}y(t)=0$.
The following section constructs a nonlinear counterpart of the linear framework considered here, in which the controller and the exosystem are composed of neuron models. 
Such an extension is constructed along the lines of \ref{I1}--\ref{I3}. 

\section{The nonlinear motif}\label{sec.nl}

This section mimics the linear approach in a framework where the systems consist of neuronal models described by equations of the following kind
\begin{subequations}\label{eq.fhn}
	\begin{align}
		\dot{x}_1 &= x_1 - \frac{1}{3}x^3_1 - x_2 +   k  \left(\sum_{i\in \mathcal{I}}\sigma_i z_i -y\right)  \\
		\dot{x}_2 &= \tau(x_1 - x_2),\\[1ex]
		y  &= x_1 ,
	\end{align} 
\end{subequations} 
where $\mathcal{I}$ is an index set indexing the presynaptic couplings, signals $z_i$  are the presynaptic voltage inputs to the neuron,  $\tau>0$ is the time constant, $k>0$ plays the role of the synaptic conductance density, and $\sigma_i\in\{-1,1\}$ satisfies $\sigma_i=1$ if the input is excitatory and $\sigma_i=-1$ if it is inhibitory. 

Model \eqref{eq.fhn} is FitzHugh-Nagumo (FN) Equation, a simplified version of the more general Hodgkin–Huxley model \citep{Murray2002}, that is sufficiently simple to enable analysis and sufficiently rich to exhibit excitable behavior.  For the sake of illustration, in \eqref{eq.fhn} neuron interconnections are modeled as a particular type of electrical synapse where the sum of the presynaptic inputs enters within a single gap junction. This considerably simplifies the contraction analysis.

As it turns out, \eqref{eq.fhn} is a convergent system.
\begin{lemma}\label{lem.contractive_FN}
	If $k > 1$, system \eqref{eq.fhn} is convergent. 
\end{lemma}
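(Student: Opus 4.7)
The plan is to show that the FitzHugh–Nagumo dynamics admit a constant Riemannian metric with respect to which the Jacobian is uniformly negative definite, and then invoke the standard link between uniform contraction and convergence (\citep{lohmiller1998contraction,pavlov_uniform_2006}).

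First, substituting $y=x_1$ into the first equation gathers the two $x_1$ terms into the coefficient $(1-k)$, and the presynaptic signals $z_i$ enter only as an additive forcing. Hence the Jacobian with respect to the state $x=(x_1,x_2)$ is independent of the inputs and reads
\[
J(x)=\begin{bmatrix} (1-k)-x_1^2 & -1 \\ \tau & -\tau \end{bmatrix}.
\]
The skew off-diagonal pair $(-1,\tau)$ prevents $J(x)+J(x)\T$ from being negative definite for generic $\tau$, so a Euclidean contraction argument does not close. The natural remedy is to look for a constant positive definite weighting $M=\diag(1,m)$ that balances these off-diagonals.

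Second, I would compute $MJ(x)+J(x)\T M$ with $M=\diag(1,1/\tau)$. The off-diagonals then become $-1+m\tau=0$, and the weighted symmetric part collapses to the diagonal matrix
\[
MJ(x)+J(x)\T M=\begin{bmatrix} 2\bigl[(1-k)-x_1^2\bigr] & 0 \\ 0 & -2 \end{bmatrix}.
\]
Under the hypothesis $k>1$, the first diagonal entry is bounded above by $2(1-k)<0$ for every $x_1\in\R$, and the second equals $-2$. Therefore, setting $\lambda:=2\min\{k-1,1\}>0$, one has the uniform estimate $MJ(x)+J(x)\T M \preceq -\lambda I$ for all $x$, independently of the admissible inputs $z_i$.

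Third, this is precisely the infinitesimal contraction condition with respect to the metric induced by $M$, with contraction rate $\lambda/(2\lambda_{\max}(M))>0$. Invoking the standard result of \citep{lohmiller1998contraction}, any two trajectories driven by the same bounded input $\{z_i\}$ converge exponentially to each other in the $M$-weighted norm, which, by \citep{pavlov_uniform_2006}, gives exactly the convergence property: for every bounded measurable input there exists a unique globally exponentially stable bounded steady-state solution. The main work is genuinely confined to the first two steps; no obstacle of substance arises, since the cubic term $-x_1^2$ only helps by making the $(1,1)$ entry more negative, and the judicious choice $m=1/\tau$ annihilates the troublesome off-diagonal contribution.
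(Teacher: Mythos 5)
Your proposal is correct and follows essentially the same route as the paper: the identical constant metric $P=\diag(1,1/\tau)$, the same cancellation of the off-diagonal terms, the same uniform bound $2\diag(1-k,-1)$ on the weighted symmetric Jacobian, and the same appeal to \citep{pavlov_uniform_2006} to pass from uniform contraction to convergence. No substantive difference.
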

\begin{proof}
	The Jacobian of the vector field $f$ describing \eqref{eq.fhn} is
	\begin{equation*}
		\D f(x) = \begin{bmatrix}
			1-k - x_1^2 & -1\\
			\tau & -\tau 
		\end{bmatrix}
	\end{equation*}
	The matrix $P=\diag(1,1/\tau)$ satisfies, for all $x\in\R^2$, 
	\begin{equation}
		P\D f(x) +\D f(x)\T P  \le  2\begin{bmatrix}
		1-k & 0\\
		0 & -1
		\end{bmatrix}.
        \label{lem.contr.eq}
	\end{equation}
	The right-hand-side is negative definite if $k>1$.
	 Thus, the result follows from \citep[Thm. 2.29]{pavlov_uniform_2006}.
\end{proof}

\begin{figure}
    \centering
    \includegraphics[width=0.425\linewidth]{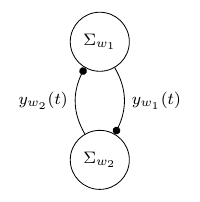}
    \caption{Nonlinear oscillator made up of two FitzHugh-Nagumo models $\Sigma_{w_1}, \Sigma_{w_2}$, interconnected via diffusive inhibitory couplings.}
    \label{fig:e_i_osc}
\end{figure}

\subsection{Neuronal Oscillator Models}
The different actors of the nonlinear neuronal system are defined as follows.
By following the linear motif, the exogenous input $w$ is modeled as the output of an oscillator. The neuronal equivalent of the harmonic oscillator \eqref{s.linear.w}  is a system defined by two neurons $\Sigma_{w_1}$ and $\Sigma_{w_2}$ linked by inhibitory couplings as described by the equations below (cf. \eqref{eq.fhn}) and depicted in Figure \ref{fig:e_i_osc}:
\begin{equation}	\label{s.w}
	\begin{aligned}
		\dot{x}_{w_1} &= x_{w_1} - \frac{1}{3}x^3_{w_1} - x_{w_2} + k_w(-y_{w_2} -y_{w_1})  \\
		\dot{x}_{w_2} &= \tau_w(x_{w_1} - x_{w_2}),\\ 
		\dot{x}_{w_3} &= x_{w_3} - \frac{1}{3}x^3_{w_3} - x_{w_4} + k_w(-y_{w_1} -y_{w_2})  \\
		\dot{x}_{w_4} &= \tau_w(x_{w_3} - x_{w_4}),\\ 
		y_{w_1}  &= x_{w_1} ,\quad 
		y_{w_2}   = x_{w_3} ,\quad 
		w  = y_{w_1},
	\end{aligned} 
\end{equation} 
where $k_w>\frac{1}{2}$, so as to guarantee that $y_{w1}$ and $y_{w2}$ asymptotically oscillate in antiphase, as established by the following lemma.

\begin{lemma}\label{lem.exo}
	If $k_w> \frac{1}{2}$, every solution of \eqref{s.w} satisfies $\lim_{t\to\infty}y_{w 1}(t)+y_{w 2}(t)= 0$.
\end{lemma}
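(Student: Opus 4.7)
The plan is to exploit the permutation symmetry of \eqref{s.w} under $(x_{w_1},x_{w_2})\leftrightarrow(x_{w_3},x_{w_4})$ by introducing the sum variables $s_1 := x_{w_1}+x_{w_3}$ and $s_2 := x_{w_2}+x_{w_4}$, and then showing that, along any trajectory of \eqref{s.w}, the pair $s=(s_1,s_2)$ satisfies a closed set of equations on which the origin is exponentially attracting. Since $y_{w_1}+y_{w_2}=s_1$, this directly yields the claim.

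Summing the two FitzHugh-Nagumo blocks in \eqref{s.w} and using the identity $a^3+b^3=(a+b)(a^2-ab+b^2)$, the cubic nonlinearity collapses to $s_1\,q$, where
\begin{equation*}
q := x_{w_1}^2 - x_{w_1}x_{w_3} + x_{w_3}^2 = \tfrac{1}{2}(x_{w_1}^2+x_{w_3}^2)+\tfrac{1}{2}(x_{w_1}-x_{w_3})^2 \ge 0
\end{equation*}
for every state. Moreover, the two gap-junction terms $k_w(-y_{w_1}-y_{w_2})$ are identical and add up to $-2k_w s_1$. The $s$-dynamics then read
\begin{equation*}
\dot s_1 = \bigl(1-2k_w-\tfrac{q(t)}{3}\bigr)s_1 - s_2, \qquad \dot s_2 = \tau_w(s_1-s_2),
\end{equation*}
namely a linear time-varying system driven by the sign-definite signal $q(t)\ge 0$.

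Next, mimicking the proof of Lemma~\ref{lem.contractive_FN}, I would pick the quadratic Lyapunov function $V(s)=s\T P s$ with $P=\diag(1,1/\tau_w)$. A direct computation makes the two $s_1 s_2$ cross terms cancel and yields
\begin{equation*}
\dot V = 2\bigl(1-2k_w-\tfrac{q}{3}\bigr)s_1^2 - 2 s_2^2.
\end{equation*}
Under the hypothesis $k_w>\tfrac{1}{2}$ and using $q\ge 0$, the coefficient of $s_1^2$ is uniformly negative, so $\dot V\le -\alpha V$ for some $\alpha>0$, hence $V(t)\to 0$ exponentially. In particular $s_1(t)\to 0$, which is exactly $y_{w_1}(t)+y_{w_2}(t)\to 0$.

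The main point, rather than an obstacle, is the recognition that summing the two identical inhibitory couplings delivers an effective contracting feedback on $s_1$ of rate exactly $2k_w$, so that the sharp threshold $k_w>\tfrac{1}{2}$ drops out of the same weighted $L_2$ analysis used in Lemma~\ref{lem.contractive_FN}; it is worth noting that the cubic nonlinearity, by nonnegativity of $q$, can only accelerate contraction along the sum direction, never oppose it, which is what allows the argument to proceed without any a priori boundedness estimate on the individual neuron trajectories.
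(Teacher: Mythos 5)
Your proof is correct, and it takes a genuinely different route from the paper. The paper does not compute the sum dynamics at all: it rewrites the coupling as a diffusive term $k_w(y_{w_1}-y_{w_2})$ plus a local feedback $-2k_wy_{w_1}$, checks that the resulting uncoupled vector field is contractive for $k_w>\tfrac12$ (in the same metric $P=\diag(1,1/\tau_w)$ as Lemma~\ref{lem.contractive_FN}) and odd, and then invokes the anti-synchronization theorem of Wang and Slotine (Thm.~4 of the cited reference) to conclude $y_{w_1}+y_{w_2}\to 0$. You instead derive the closed dynamics of $s=(x_{w_1}+x_{w_3},\,x_{w_2}+x_{w_4})$, observe that the cubic terms collapse to $-\tfrac13 s_1 q$ with $q\ge 0$ and that the two identical gap-junction terms deliver $-2k_w s_1$, and run a direct Lyapunov estimate $\dot V\le-\alpha V$ with the same weight matrix. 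In effect you have unrolled, for this specific symmetric two-cell network, the virtual-system argument that underlies the cited partial-contraction theorem: the anti-synchronization subspace $s=0$ is flow-invariant by oddness, and your computation is the explicit proof that it is exponentially attracting. What the paper's route buys is modularity — the same citation covers other odd contractive nodes and other coupling topologies without redoing the algebra — and consistency with the contraction-theoretic framing used throughout. What your route buys is self-containedness, an explicit exponential rate, and a transparent explanation of why the threshold is exactly $k_w>\tfrac12$ (the summed coupling acts on $s_1$ with gain $2k_w$ against the unit local instability) and why the cubic nonlinearity can only help. The only shared, minor omission is forward completeness of the full system, which both arguments implicitly assume and which holds here because the FitzHugh--Nagumo vector field is dissipative; your remark that no boundedness of the individual trajectories is needed is accurate, since $q(t)\ge 0$ is all the argument requires.
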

\begin{proof}
    System \eqref{s.w} is an interconnection of two identical oscillators. In order to obtain asynchronous oscillations, it is sufficient that the uncoupled vector field $f$ of the single oscillator is contractive and odd in its arguments. 
    
    Introducing $\pm k_{w}y_{w_1}$, the first equation rewrites as
    \begin{equation*}
        \begin{aligned}
    \dot{x}_{w_1} &= x_{w_1} - \frac{1}{3}x^3_{w_1} - x_{w_2} - 2k_w y_{w_1} + k_w(y_{w_1} - y_{w_2})  \\
    \dot{x}_{w_2} &= \tau_w(x_{w_1} - x_{w_2}),\\ 
    \end{aligned}
    \end{equation*}
    where the term $k_w(y_{w_1} - y_{w_2})$ represents the coupling force. 
    Without the coupling, equation \eqref{lem.contr.eq} from Lemma \ref{lem.contractive_FN} becomes
    	\begin{equation*}
		P\D f(x) +\D f(x)\T P  \le  2\begin{bmatrix}
		1-2k_w & 0\\
		0 & -1
		\end{bmatrix},
	\end{equation*}
    proving contraction for $k_w > \frac{1}{2}$. 
    
    Moreover, It is easy to notice that the vector field is odd in $(x_{w_1}, x_{w_2})$. Thus, the result follows from \citep[Thm. 4]{wang2005partial}.
\end{proof}

\begin{figure}[h]
	\centering
	\includegraphics[width=0.4\textwidth]{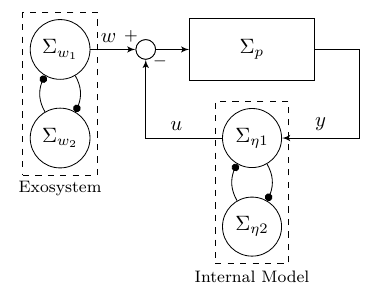}
	\caption{Nonlinear counterpart of the control structure depicted in Figure \ref{fig.control_structure_linear}. The exosystem and controller are now replaced by two nonlinear oscillators.}
	\label{fig.contr_struct.nl}
\end{figure}

The controlled system is modeled as a single neuron system coupled with an excitatory disturbance $w$ and an inhibitory control action $u$, i.e.
\begin{equation}\label{s.E}
	\begin{aligned} 
		\dot{x}_{p_1} &= x_{p_1} - \frac{1}{3}x_{p_1}^3 - x_{p_2} +  k_p \Big ( w - u - y \Big )\\
		\dot{x}_{p_2} &= \tau_p(x_{p_1} - x_{p_2}), \\
		y &= x_{p_1}.
	\end{aligned} 
\end{equation}
Oscllations of the exosystem lead to a disturbance signal $w$ presenting a spike train. As anticipated in the introduction, each spike is associated with an event. As $w$ enters with an excitatory coupling in \eqref{s.E}, if $u=0$ each event provokes a corresponding event in the output $y$ of the controlled system  \eqref{s.E}. The control goal is then to reject all such events by inhibiting them in such a way that the overall input $w-u$ of  \eqref{s.E} is sub-threshold, so as one observes no events in the output $y$.

The design of the controller to reject $w$ is approached by mimicking the linear solution: As in \eqref{s.linear}-\eqref{s.linear.gains}, the controller is designed to embed a copy of the exosystem \eqref{s.w} driven by the output $y(t)$.
In particular, this has the same structure of the system depicted in Figure \ref{fig.contr_struct.nl} and described by the following equations (cf. \eqref{s.w})
\begin{equation}	\label{s.I}
	\begin{aligned}
		\dot{x}_{\eta_1} &= x_{\eta_1} - \frac{1}{3}x^3_{\eta_1} - x_{\eta_2} + k_{\eta}(y -y_{\eta_2}-y_{\eta_1})   \\
		\dot{x}_{\eta_2} &= \tau_{\eta} (x_{\eta 1} - x_{\eta_2}),\\ 
		\dot{x}_{\eta_3} &= x_{\eta_3} - \frac{1}{3}x^3_{\eta_3} - x_{\eta_4} + k_{\eta}(-y_{\eta_1} -y_{\eta_2})  \\
		\dot{x}_{\eta_4} &= \tau_{\eta} (x_{\eta_3} - x_{\eta_4}),\\ 
		y_{\eta_1}  &= x_{\eta_1} ,\quad y_{\eta_2}   = x_{\eta_3} , \quad 
		u  = y_{\eta_1}.
	\end{aligned} 
\end{equation} 
If $(k_\eta, \tau_{\eta}) = (k_w, \tau_w)$, when $y=0$ as desired, then system \eqref{s.I} equals \eqref{s.w}.

Figure~\ref{fig:sim.exact_reg} shows that the controller \eqref{s.I} achieves classical disturbance rejection when the internal model is perfectly calibrated to the exosystem, resulting in a vanishing steady-state output \( y(t) \). In Figure~\ref{fig:sim.below_threshold}, where parametric mismatches between the internal model and the exosystem are introduced, a vanishing steady-state is no longer achieved. Nonetheless, the controller effectively prevents the occurrence of spikes in the system's output, still guaranteeing event rejection.


\section{Discussion}\label{sec.disc}

\subsection{About Closed-loop Contraction}

The top plot of Figure~\ref{fig:sim.not_contr} shows that, in the absence of the disturbance signal $w$, trajectories initialized from different conditions converge to distinct steady-states. This behavior indicates that the closed-loop system \eqref{s.E}--\eqref{s.I} is not contractive. 
However, when the disturbance signal $w$ is introduced, as illustrated in the bottom plot of Figure~\ref{fig:sim.not_contr}, trajectories initialized from different conditions converge to a common steady-state trajectory. This suggests that the system exhibits input-dependent contraction: the presence of $w$ induces convergence across trajectories. This phenomenon is consistent with observations in neuroscience, where neurons driven by suitable stimuli display reliable, synchronized responses across trials, whereas they show incoherent behavior for other stimuli~\citep{mainen1995reliability, brette2003reliability, ermentrout2008reliability}.

\subsection{Similarities with the Linear Case}

Define $\mathcal{F}$ as the class of signals generated by the uncoupled FN model \eqref{eq.fhn} (with $k=0$). The closed-loop system \eqref{s.w}-\eqref{s.I} exhibits the following analogies to the linear regulation framework:
\begin{enumerate}[label=\textbf{F\arabic*.},ref=\textbf{F\arabic*}]
	\item\label{F1} By Lemma~\ref{lem.contractive_FN}, system \eqref{s.E} is convergent, thus BIBO stable, mirroring the internal stability requirement of linear output regulation.
	
	\item\label{F2} Lemma~\ref{lem.contractive_FN} also implies that if the FN neuron \eqref{eq.fhn} is driven by inputs from $\mathcal{F}$, the output converges asymptotically to the input itself. Formally, for inputs satisfying
	\[
	\sum_{i\in \mathcal{I}} \sigma_i z_i \in \mathcal{F},
	\]
	the output $y$ of \eqref{eq.fhn} satisfies
	\[
	\lim_{t\to\infty} \left|y(t)-\sum_{i\in \mathcal{I}}\sigma_i z_i(t)\right|=0,
	\]
	indicating a \emph{pass-all} property relative to $\mathcal{F}$.
	
	\item\label{F3} The closed-loop system explicitly incorporates an internal model that replicates the exosystem, analogous to the internal model principle in the linear framework.
\end{enumerate}

These properties \ref{F1}-\ref{F3} strongly reflect essential concepts from the linear setting. Moreover, leveraging input-dependent contraction observed in the previous subsection, it guarantees a unique steady-state solution compatible with the objective $y=0$.

Interestingly, one might question the necessity of embedding a full internal model identical of the exosystem. After all, the considered model \eqref{eq.fhn} shows sustained oscillations when $k=0$. Thus, could a simpler single-neuron internal model suffice? Consider the following candidate 
\begin{align}\label{eq.single_neuron_candidate}
	\begin{split}
		\dot{x}_{\eta_1} &= x_{\eta_1} - \frac{1}{3} x_{\eta_1}^3 - x_{\eta_2} + k_w(y - x_{\eta_1}), \\
		\dot{x}_{\eta_2} &= \tau(x_{\eta_1} - x_{\eta_2}),
	\end{split}
\end{align}
with output $u = x_{\eta_1}$. Suppose this simpler internal model achieves steady-state rejection, implying $y=0$ at steady-state. Then, at steady-state, system \eqref{eq.single_neuron_candidate} reduces to
\begin{align*}
	\dot{x}_{\eta_1} &= x_{\eta_1} - \frac{1}{3} x_{\eta_1}^3 - x_{\eta_2} - k_w x_{\eta_1},\\
	\dot{x}_{\eta_2} &= \tau(x_{\eta_1} - x_{\eta_2}).
\end{align*}
However, by Lemma \ref{lem.contractive_FN}, this simplified model is contractive, and since $x_{\eta}=0$ is a steady-state solution, it follows that $x_{\eta}(t) \to 0$. This outcome contradicts the requirement for $y=0$, which necessitates $u$ (thus $x_{\eta_1}$) to match the exosystem-generated disturbance $w$. Hence, at least two neurons replicating the exosystem dynamics are necessary for proper internal model implementation.
This necessity for redundancy is coherent with findings in distributed integral control, where sparse network constraints similarly require redundant internal models \citep{Notarnicola2023Gradient, bin2022stability}.

\subsection{Extension to Open Systems}
\begin{figure}[ht]
    \centering
    \includegraphics[width=0.4\textwidth]{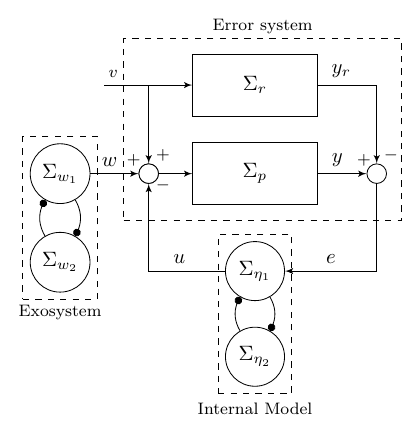}
    \caption{Revisited control structure in the presence of a driving signal $v$. Now the system $\Sigma_p$ is in parallel with a reference system $\Sigma_r$ representing its unperturbed behavior, necessary to build $e$.}
    \label{fig:control_structure_ref}
\end{figure}

This section extends the approach developed in Section~\ref{sec.nl} to the case where the plant is also driven by a measured open input. In this setting, the event disturbance rejection problem can be interpreted as the problem of restoring a desired input-output behavior. This framework encompasses many applications of interest~\citep{bin2023internal, schmetterling2022adaptive}, and introduces an additional degree of freedom through the design of the auxiliary input \( v(t) \). Such a signal plays a twofold role: first, it induces a reliable and reproducible behavior on the controlled system as previously mentioned \citep{kosmidis2003analysis}; second, it specifies the desired input-output spiking pattern that the closed-loop system aims to replicate.

Following the classical output regulation approach, we must construct an error signal \(e(t)\) to solve this problem. This involves the introduction of an auxiliary system $\Sigma_r$ that acts as \emph{reference system}, and is modeled as
\begin{equation}\label{sys.reference}
	\begin{aligned} 
		\dot{x}_{r_1} &= x_{r_1} - \frac{1}{3}x_{r_1}^3 - x_{r_2} +  k_r \Big ( v  - y_r \Big )\\
		\dot{x}_{r_2} &= \tau_r(x_{r_1} - x_{r_2}), \\
		y_r &= x_{r_1},
	\end{aligned} 
\end{equation}
where $(k_r, \tau_r) = (k_p, \tau_p)$. This system is placed in parallel to the plant, generating the error signal \(e(t) := y(t) - y_r(t)\). The resulting closed-loop system is depicted in Figure \ref{fig:control_structure_ref}. Being \eqref{sys.reference} contractive as proven in Lemma \ref{lem.contractive_FN}, also the error system resulting by the parallel interconnection of \eqref{s.E} and \eqref{sys.reference} is contractive \citep{lohmiller1998contraction}. Thus, the same reasoning done before applies also in this case for the error system $\Sigma_e$.

It is important to note that even in this scenario, parametric errors in the knowledge of both the reference and the internal model clearly result in a nonzero steady-state error. However, these modeling errors do not compromise the desired input-output behavior, seen in terms of spiking events in the plant's output, as shown in Figure \ref{fig:sim.with.v}. Also in this case, it seems that the presence of the auxiliary input $v(t)$ helps convergence of the new closed-loop system shown in Figure \ref{fig:control_structure_ref}, following the same reasoning discussed in Section \ref{sec.disc}. 

\section{Conclusions}
This article explored the problem of event disturbance rejection in excitable systems. Inspired by the structure of linear output regulation, the proposed control strategy relies on embedding an internal model and enforcing contraction properties through control design. This enforced convergence allows a steady-state analysis analogous to the linear setting.

The key difference from the classical linear theory is that contraction cannot be guaranteed by system structure alone but must emerge from input-dependent behaviors, a phenomenon closely related to neural reliability observations in neuroscience. By focusing on rejecting discrete events rather than exact trajectories, the proposed design demonstrates robustness against model-controller mismatches: even in the presence of parametric errors, unwanted spikes in the system output are successfully prevented.

Additionally, the extension to open systems was considered, where the additional input signal $v(t)$ contributes to achieving convergence in the closed-loop system. Although perfect regulation of trajectories is lost, the spike pattern is preserved. Also in this case, numerical simulations shows that even in case of parameters mismatch, the spiking behavior remains robust, still preventing unwanted spikes in the system's output.

It is important to note that the presented arguments heavily rely on the considered model \eqref{eq.fhn}, a specific way of modeling synaptic couplings, and the requirement of an internal model for each subproblem to be addressed. Future work will focus on developing purely \emph{event}-oriented coupling mechanisms, resembling the synaptic coupling models commonly used in computational neuroscience. Further analysis will also be conducted on the convergence properties of open systems, with the aim of characterizing the class of input signals that ensure such behavior. Finally, the extent to which these results generalize to other models of neuronal dynamics and synaptic interactions remains an open question and will be the focus of future work.

\begin{figure}
    \centering
    \includegraphics[width=0.425\textwidth]{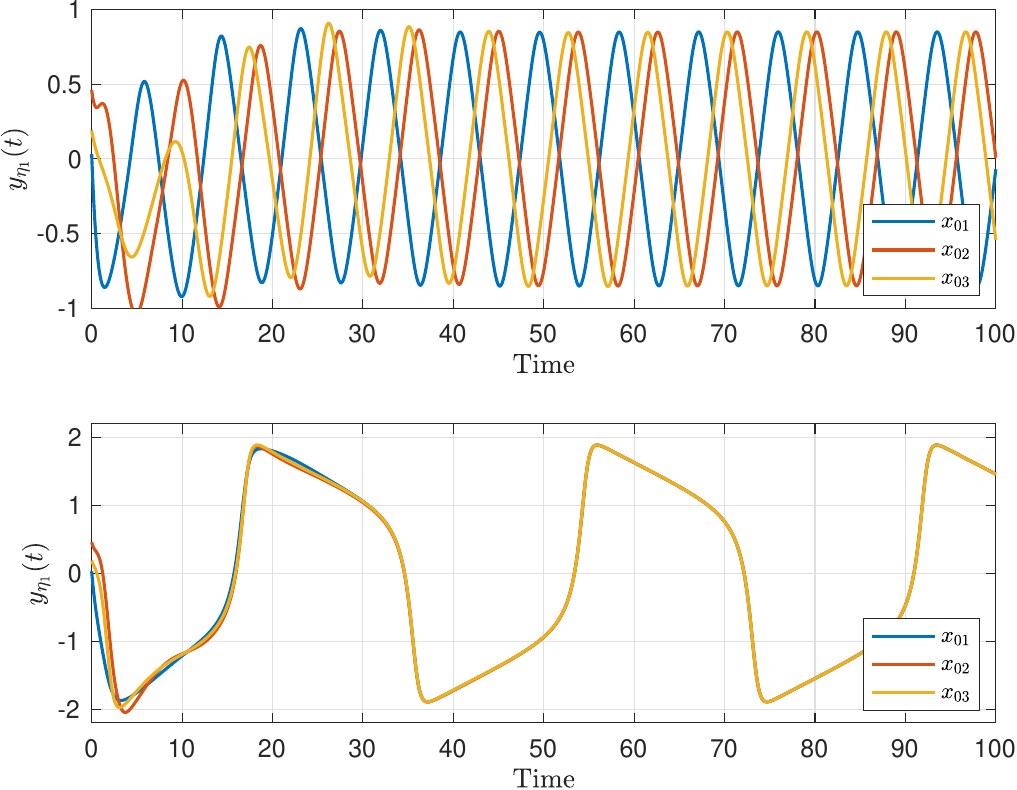}
    \caption{Simulations of the closed-loop system \eqref{s.E}-\eqref{s.I} when $w = 0$ (top plot) and $w \ne 0$ (bottom plot). The system is simulated for different initial conditions, showing that convergence is not achieved  when $w=0$ (top plot), while convergence is enforced when the same signal $w$ is nonzero (bottom plot). The simulations are obtained with $(k_p, \tau_p) = (1, 1/11)$, and $(k_{\eta}, \tau_{\eta})=(2, 1/12)$.}
    \label{fig:sim.not_contr}
\end{figure}
\begin{figure}[h]
    \centering
    \includegraphics[width=0.425\textwidth]{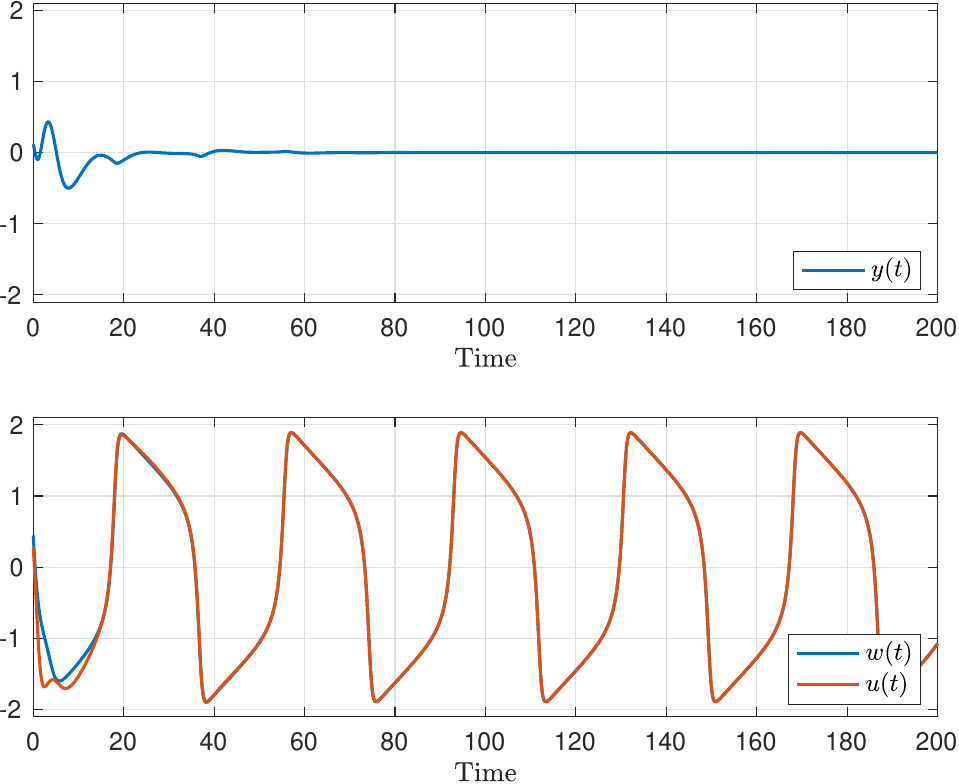}
    \caption{Simulations of the disturbance rejection problem using controller \eqref{s.I}. The internal model synchronizes with the exosystem dynamics (bottom plot) and the plant's output reaches a zero steady-state (top plot). The simulations are obtained with $(k_p, \tau_p) = (1, 1/11)$, and $(k_w, \tau_w) = (k_{\eta}, \tau_{\eta}) = (2, 1/12)$.}
    \label{fig:sim.exact_reg}
\end{figure}

\begin{figure}[h]
    \centering
    \includegraphics[width=0.425\textwidth]{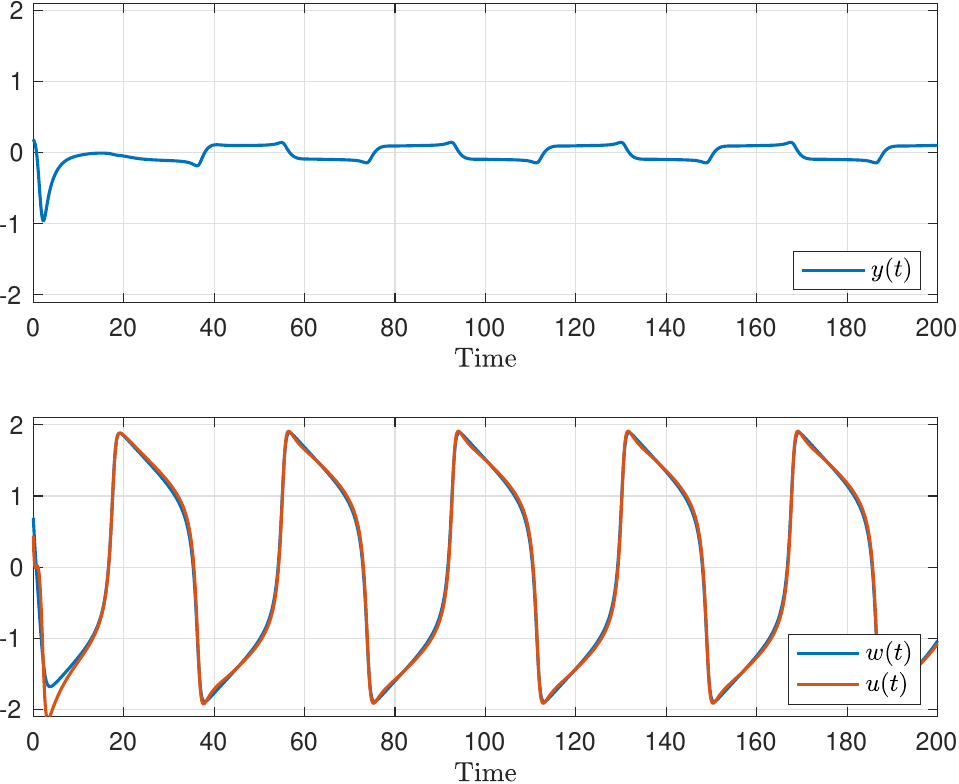}
    \caption{Simulations using controller \eqref{s.I} with $(k_w, \tau_w) \ne (k_{\eta}, \tau_{\eta})$. A small synchronization error between the internal model and the disturbance (bottom plot) generates a non-zero steady-state for the plant's output (top plot). However, while the resulting output oscillates around zero, it does not present any spike in the output. The simulations are obtained with $(k_p, \tau_p) = (1, 1/11)$, $(k_w, \tau_w) = (1 ,1/12)$, and $(k_{\eta}, \tau_{\eta})=(3, 1/16)$.}
    \label{fig:sim.below_threshold}
\end{figure}

\begin{figure}[h]
    \centering
    \includegraphics[width=0.425\textwidth]{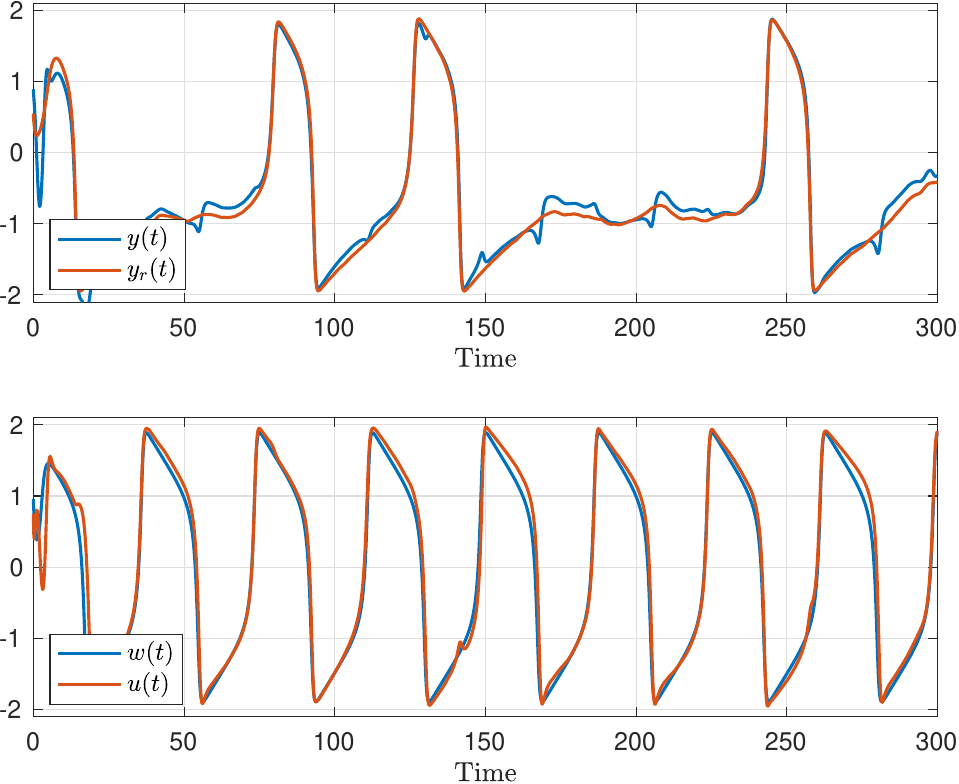}
    \caption{Simulations in the presence of an additional signal $v(t)$, using controller \eqref{s.I} and reference \eqref{sys.reference} with $(k_w, \tau_w) \ne (k_{\eta}, \tau_{\eta})$, and $(k_{r}, \tau_r) \ne (k_p, \tau_r)$. The parametric error in both systems generates a partial synchronization in the internal model (bottom plot) and in the error system (top plot), but does not change the spiking pattern of the system. The simulations are obtained with $(k_p, \tau_p) = (1, 1/11)$, $(k_r, \tau_r) = (3, 1/13)$ $(k_w, \tau_w) = (1, 1/12)$, and $(k_{\eta}, \tau_{\eta})=(4, 1/15)$.}
    \label{fig:sim.with.v}
\end{figure}

\bibliography{bibliography}
\end{document}